\documentclass{article}
\usepackage{graphicx} 
\usepackage{amsmath, amssymb, amsthm}
\usepackage{xcolor}
\usepackage{comment}

\usepackage{algorithm}
\usepackage{algpseudocode}

\usepackage[maxnames=50]{biblatex}
\addbibresource{literature_paper.bib}

\usepackage{hyperref}

\newcommand{\F}{\mathbb{F}}
\newcommand{\Z}{\mathbb{Z}}
\newcommand{\R}{\mathbb{R}}

\newcommand{\covol}{\operatorname{covol}}
\newcommand{\OD}{\operatorname{OD}}
\newcommand{\wt}{\operatorname{wt}}
\newcommand{\im}{\operatorname{im}}

\newtheorem{theorem}{Theorem}

\newtheorem{lemma}[theorem]{Lemma}

\newtheorem{corollary}[theorem]{Corollary}
\newtheorem{problem}[theorem]{Problem}

\theoremstyle{definition}

\newtheorem{definition}[theorem]{Definition}
\theoremstyle{remark}
\newtheorem{remark}[theorem]{Remark}
\newtheorem{example}[theorem]{Example}

\title{Polynomial Lattices for the BIKE Cryptosystem}
\author{Michael Schaller \\ E-mail: michael.schaller@math.uzh.ch}
\date{February 23, 2026}

\begin{document}

\maketitle

\begin{abstract}
In this paper we introduce a rank $2$ lattice over a polynomial ring arising from the public key of the BIKE cryptosystem \cite{aragon2022bike}.
The secret key is a sparse vector in this lattice.
We study properties of this lattice and generalize the recovery of weak keys from \cite{BardetDLO16}.
In particular, we show that they implicitly solved a shortest vector problem in the lattice we constructed.
Rather than finding only a shortest vector, we obtain a reduced basis of the lattice which makes it possible to check for more weak keys.
\end{abstract}

\section{Introduction and Related Work}

One of the round $4$ candidates of the NIST competition for Post-Quantum cryptography was the BIKE cryptosystem \cite{aragon2022bike}.
It has not been selected for standardization, but the related scheme HQC \cite{melchor2018hamming} will be standardized.

In this paper we construct a lattice over a polynomial ring to study the BIKE cryptosystem.
The lattice is constructed from the public key while the secret key is a sparse vector in this lattice.
We investigate the properties of this lattice, in particular the successive minima of the lattice.

Moreover, in our framework we can reinterpret the work from \cite{BardetDLO16} in which a class of weak keys for BIKE has been exhibited.
We extend their ideas using the work of Mahler \cite{mahler1941analogue} and Lenstra \cite{lenstra1983factoring} on lattices over polynomial rings and reformulate the results of \cite{BardetDLO16}.
This enables us to give a more general perspective on the problem and to obtain more weak keys than in the original work in \cite{BardetDLO16}.
It turns out that (without using this perspective) they find a shortest vector in the lattice.
Using the lattice we are able to obtain a reduced basis instead of just a shortest vector in the lattice which allows to test for more weak keys.

We would like to point out that the weak keys from \cite{BardetDLO16} are not related to the weak keys that are studied in connection with decoding failure rates.

Our framework bears similarity to the function field decoding approach introduced in \cite{BombarCD22}.
While they work on different problems, in particular a search to decision reduction, several of the general ideas about the BIKE cryptosystem and more generally quasi-cyclic codes are implicit in their work.
Thinking about the Minkowski embedding for rings of integers of number fields and classical lattices in the real numbers makes the connection to their work more obvious.

There have also been other works that investigated lattices over non-archimedean fields in other settings.
Especially noteworthy are the works of Bernstein \cite{Bernstein2008smallheight}, Cohn and Heninger \cite{CohnH15} who used an analogy with Coppersmith's theorem in order to reinterpret the Guruswami-Sudan list decoding algorithm.
In addition, there have been applications to the computations of Riemann-Roch spaces, for example in Bauch's PhD thesis \cite{Bauch2014}.

The structure of the paper is as follows.
In Section \ref{sect:BIKE} we briefly describe the BIKE cryptosystem and the weak key attack by the authors of \cite{BardetDLO16}.
In Section \ref{sect:real_lattices} we deal with lattices over $\Z$ and give some motivation for the use of polynomial lattices for BIKE later.
In Section \ref{sect:poly_lattices} we introduce and prove results needed later about lattices in the non-archimedean case.
Section \ref{sect:BIKE_Lattices} contains the main part of the paper in which we introduce a lattice related to the BIKE cryptosystem and prove all the statements we need for the lattices in our treatment of the weak key attack on the BIKE cryptosystem.
We conclude with Section \ref{sect:conclusion}.

\section{The BIKE Cryptosystem and a Class of Weak Keys}
\label{sect:BIKE}

First of all we describe the BIKE cryptosystem \cite{aragon2022bike}.
Let us fix a prime $r$ such that $2$ generates $\F_r^\times$.
The secret key consists of two sparse (weight $v \sim \sqrt{r}$) elements $h_1, h_2 \in \F_2[x]/(x^r - 1)$.
The public key is $\frac{h_1}{h_2}$.
The goal of a key recovery attack is to recover $h_1$ and $h_2$ from $h$.
Actually, we do not need the very same $h_1$ and $h_2$.
Any two sparse elements $h_1, h_2$ satisfying $h = \frac{h_1}{h_2}$ are good enough.

Next we sketch the idea of the attack from \cite{BardetDLO16}.
Let us rewrite the equation for the public key as
\[
    h_1 = h h_2 \mod x^r - 1.
\]
This gives us the rational reconstruction problem which can be solved uniquely by the extended Euclidean algorithm as long as $\deg(h_1) + \deg(h_2) < r$.
The proof of uniqueness is in essence the same as the proof of Theorem 4.8 in \cite{shoup2009computational}.
The authors of \cite{BardetDLO16} also use group actions that come from the code being quasi-cyclic and from the squaring map on $\F_2[x]/(x^r - 1)$ to get more weak keys.
We will not describe it here, but only mention that one can do the same thing using the new approach from our paper.

More information about the rational reconstruction problem in the case of integers can be found in \cite{shoup2009computational}.

In the next sections we will give another perspective on the problem which we deem more natural and which allows to change the condition that $\deg(h_1) + \deg(h_2) < r$.

\section{Lattices and Thue's Lemma}
\label{sect:real_lattices}

In this section we motivate the use of lattices in the later sections.
The results here give an indication of what will happen in Sections \ref{sect:poly_lattices} and \ref{sect:BIKE_Lattices} for polynomial lattices.
Here we consider lattices as discrete subgroups of $\R^n$ (in contrast to the next sections where we consider lattices over polynomial rings).
All the lattices are assumed to be full rank.

Let $\gamma \in \Z_{> 0}$.
Now we introduce $\gamma$-ary lattices (usually called $q$-ary lattices where $\gamma$ is replaced by $q$).
Let $k \leq n$ and $A \in \Z/(\gamma)^{k \times n}$ be a matrix.
We define
\[
    \Lambda_\gamma^\perp(A) = \{x \in \Z^n : Ax = 0 \mod \gamma \}.
\]
We get the following well known lemma.
\begin{lemma}
\label{lemma:covol q-ary lattice}
Let $k \leq n$ and $A \in \Z/(\gamma)^{k \times n}$ be a matrix.
Then
    \[
        \covol(\Lambda_\gamma^\perp(A)) \leq \gamma^k.
    \]
\end{lemma}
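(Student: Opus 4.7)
The plan is to realize $\Lambda_\gamma^\perp(A)$ as the kernel of a natural surjection from $\Z^n$ onto a subgroup of $(\Z/\gamma)^k$, and then translate this into a covolume statement using the standard fact that covolume is multiplicative in sublattice indices.

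First I would introduce the group homomorphism $\phi : \Z^n \to (\Z/\gamma)^k$ defined by $\phi(x) = Ax \bmod \gamma$. By definition, $\ker(\phi) = \Lambda_\gamma^\perp(A)$. Note also that $\gamma \Z^n \subseteq \Lambda_\gamma^\perp(A) \subseteq \Z^n$, which already guarantees that $\Lambda_\gamma^\perp(A)$ is a full-rank sublattice of $\Z^n$, so covolume makes sense.

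Next I would apply the first isomorphism theorem to get $\Z^n / \Lambda_\gamma^\perp(A) \cong \im(\phi)$, hence
\[
[\Z^n : \Lambda_\gamma^\perp(A)] = |\im(\phi)| \leq |(\Z/\gamma)^k| = \gamma^k.
\]
Finally, using $\covol(\Z^n) = 1$ and the multiplicativity relation $\covol(\Lambda_\gamma^\perp(A)) = [\Z^n : \Lambda_\gamma^\perp(A)] \cdot \covol(\Z^n)$ yields the desired bound $\covol(\Lambda_\gamma^\perp(A)) \leq \gamma^k$.

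There is no real obstacle here; the only subtle point is to observe that one does not need $\phi$ to be surjective to conclude the inequality (equality would hold if and only if $A$ has full row rank mod $\gamma$), which is precisely why the statement is phrased with ``$\leq$'' rather than ``$=$''.
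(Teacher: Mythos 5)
Your proposal is correct and follows essentially the same route as the paper: both identify $\Z^n/\Lambda_\gamma^\perp(A)$ with $\im A \subseteq (\Z/\gamma)^k$ via the first isomorphism theorem and bound the index by $\gamma^k$. You are merely a bit more explicit about the final passage from index to covolume, which the paper leaves implicit.
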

\begin{proof}
    Let $\pi: \Z^n \to \Z/(\gamma)^n$ be the natural projection.
    Then
    \[
        \Lambda_\gamma^\perp(A) = \pi^{-1} (\ker A).
    \]
    Hence 
    \[
        \Z^n / \Lambda_\gamma^\perp(A) \cong (\Z/(\gamma)^n)/(\ker A) \cong \im A \subseteq \Z/(\gamma)^k.
    \]
    The last inclusion implies the claim.
\end{proof}

We first state Thue's lemma which can be found for instance in \cite{shoup2009computational} and then prove it using Minkowski's Theorem.

\begin{lemma}[Thue]
	\label{lemma:Thue}
	Let $\gamma, b, a^\ast, t^\ast \in \Z_{\geq 0}$.
	Assume $0 < a^\ast \leq \gamma < a^\ast t^\ast$.
	Then there exist $a, t \in \Z$ such that $|a| < a^\ast, 0 < |t| < t^\ast$ satisfying
	\[
	a \equiv b t \mod \gamma.
	\]
\end{lemma}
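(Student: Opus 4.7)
The plan is to apply Minkowski's theorem to the $\gamma$-ary lattice attached to the single congruence $a \equiv bt \pmod{\gamma}$, viewed as the kernel (mod $\gamma$) of the $1 \times 2$ matrix $A = (1, -b)$. Explicitly, set
\[
    \Lambda = \Lambda_\gamma^\perp(A) = \{(a, t) \in \Z^2 : a \equiv bt \bmod \gamma\}.
\]
By Lemma \ref{lemma:covol q-ary lattice} (with $n = 2$, $k = 1$) we have $\covol(\Lambda) \leq \gamma$, which is the only structural fact about $\Lambda$ we need.

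Next, I would introduce the open, symmetric, convex body
\[
    C = (-a^\ast, a^\ast) \times (-t^\ast, t^\ast) \subset \R^2,
\]
whose Lebesgue volume is $4 a^\ast t^\ast$. The hypothesis $\gamma < a^\ast t^\ast$ rewrites as $4 a^\ast t^\ast > 4 \gamma \geq 2^2 \covol(\Lambda)$, which is exactly Minkowski's threshold in dimension $2$. Hence Minkowski's theorem yields a nonzero lattice point $(a, t) \in \Lambda \cap C$, and the membership in $C$ immediately gives $|a| < a^\ast$ and $|t| < t^\ast$, while $(a,t) \in \Lambda$ gives the required congruence.

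The last step is to rule out $t = 0$, which is where the remaining hypothesis $a^\ast < \gamma$ enters. If $t = 0$, the congruence reduces to $a \equiv 0 \pmod{\gamma}$, and together with $|a| < a^\ast < \gamma$ this forces $a = 0$, contradicting that $(a, t)$ is a nonzero lattice point. Therefore $0 < |t| < t^\ast$, completing the proof.

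There is no real obstacle here: once one identifies the correct $\gamma$-ary lattice and the correct box, the two inequalities in the hypothesis are perfectly matched to the two needs of the argument — $a^\ast t^\ast > \gamma$ supplies the Minkowski volume bound, and $a^\ast < \gamma$ ensures $t \neq 0$. The only thing to be mildly careful about is using Minkowski's theorem for an \emph{open} body (strict volume inequality, strict bounds on $|a|, |t|$), which is the form dictated by the statement.
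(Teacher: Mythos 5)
Your proposal is correct and follows essentially the same route as the paper: the same $\gamma$-ary lattice with covolume bound from Lemma \ref{lemma:covol q-ary lattice}, the same open box of volume $4a^\ast t^\ast > 4\gamma$ fed into Minkowski's theorem, and the same use of $a^\ast < \gamma$ to exclude $t = 0$. Your write-up is in fact slightly more explicit than the paper's on that last exclusion step, which the paper leaves as ``easy to see.''
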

\begin{proof}
	Consider the following lattice:
	\[
	L = \{(a, t) : a - b t \equiv 0 \mod \gamma \}.
	\]
	We want to estimate the covolume of the lattice in order to apply Minkowski's Theorem.
	We have the trivial inclusions $\gamma \Z^2 \subseteq L \subseteq \Z^2$.
	But more is true.
	$L = \{(a, t) \in \Z^2 : (1, -b) (a, t)^T \equiv 0 \mod \gamma \}$ is an $\gamma$-ary lattice.
	Thus, using Lemma \ref{lemma:covol q-ary lattice}, we get  that
	\[
	\covol(L) \leq \gamma.
	\]
	Under the assumption that the volume of a convex, symmetric (around $0$) body $R$ is larger than $4\gamma$, we get, using Minkowski's Theorem, a nonzero lattice point in $R$.
	The rectangle $R = \{(x_1, x_2) : |x_1| < a^\ast, |x_2| < t^\ast \}$ has volume $4 a^\ast t^\ast > 4 \gamma$.
	Hence there exists a nonzero lattice point in it.
	It is easy to see that if $t = 0$ we also get $a = 0$ and the statement is proven.
\end{proof}
Note that we could also additionally assume $a \geq 0$ since we will get a nonzero lattice point $x$ and its inverse $-x$.
In \cite{shoup2009computational} the author gives an efficient algorithm relying on the extended Euclidean algorithm to give an explicit solution for Lemma \ref{lemma:Thue}.
Note that this amounts to finding a short vector in a rank $2$ lattice.
So instead of the extended Euclidean algorithm we could run a lattice reduction algorithm like the Lagrange algorithm for rank $2$ lattices.

\section{Lattices over Polynomial Rings}
\label{sect:poly_lattices}

In this section we explore the concept of a lattice over a polynomial ring $k[x]$ where $k$ is a field.
The study of this subject has been initiated by Mahler in \cite{mahler1941analogue} in which he treated an analogue of Minkowski's geometry of numbers \cite{minkowski1910geometrie} for non-archimedean fields.
He also proved an analogue of Minkowski's Theorem.
We will mainly be concerned with the paper by Lenstra \cite{lenstra1983factoring} in which he presented a polynomial time algorithm for lattice reduction which ends up with a basis that has an orthogonality defect of $0$.
We will define lattices only over polynomial rings and not over general non-archimedean fields as done in \cite{mahler1941analogue}.
For the purposes of this paper it is not even necessary to go to the field of rational functions or to Laurent series.
We will also not go into the improvements of lattice reduction algorithms over polynomial rings like \cite{MuldersS03, GiorgiJV03} since for our purposes Lenstra's algorithm is sufficient.

We mainly recall the relevant results from \cite{lenstra1983factoring}, state the analogue of Minkowski's Theorem due to Mahler \cite{mahler1941analogue} and add a few results we could not find in the literature.

\begin{definition}
    For $f \in \F_q[x]$ we define $|f| = \deg f$ for $f \neq 0$ and $|0| = - \infty$.
    We define the \textit{norm} of $a = (a_1, \ldots, a_n) \in \F_q[x]^n$ as 
    \[
        |(a_1, \ldots, a_n)| = \max \{|a_1|, \ldots, |a_n| \}.
    \]
\end{definition}
Note that $|f|$ for $f \in \F_q[x]$ does not behave like a non-archimedean norm, but like the additive inverse of a discrete valuation or equivalently like the logarithm of a non-archimedean norm.

\begin{definition}
    Let $b_1, \ldots, b_n \in \F_q[x]^n$ be linearly independent over $\F_q[x]$.
    Then the \textit{lattice} spanned by $b_1, \ldots, b_n$ is defined to be
    \[
        L = \left\{ \sum_{i=1}^n \mu_i b_i : \mu_i \in \F_q[x] \right\}.
    \]
    The \textit{determinant} $\det(L)$ of the lattice is defined as the determinant of the matrix $B$ which has $b_1, \ldots, b_n$ as rows.
    The \textit{covolume} $\covol(L) = |\det(L)|$ of the lattice $L$ is the degree of the determinant.
    The \textit{orthogonality defect} of the basis $b_1, \ldots, b_n$ of the lattice $L$ is defined to be
    \[
        \OD(b_1, \ldots, b_n) = \left(\sum_{i = 1}^n |b_i| \right) - |\det(L)|.
    \]
    We call a basis with orthogonality defect $0$ a \textit{reduced} basis.
\end{definition}

\begin{remark}
    Note that $OD(b_1, \ldots, b_n) \geq 0$.
    Since the determinant of a lattice is unique up to elements in $\F_q^\ast$ we will consider the determinant only up to units.
    Moreover, since we work with the additive inverse of a valuation, we get after taking the orthogonality defect to an appropriate power (say $q^{\OD(b_1, \ldots,b_n)}$) the correct analogue of the usual definition of the orthogonality defect.
\end{remark}

Next we define the analogue of an important classical notion for lattices.
\begin{definition}
\label{def:successive_min_hypercube}
    The first \textit{successive minimum} is defined to be the norm $|m_1|$ of a shortest non-zero vector $m_1$ in the lattice $L$.
    Iteratively we define the $j$-th successive minimum for $2 \leq j \leq n$ as follows.
    Take $j-1$ linearly independent vectors $m_1, \ldots, m_{j-1} \in L$ that attain the corresponding successive minima.
    The $j$-th \textit{successive minimum} is the norm $|m_j|$ of a shortest vector $m_j \in L$ that is linearly independent of $m_1, \ldots, m_{j-1}$.
\end{definition}

\begin{remark}
    The value of the $j$-th successive minimum does not depend on the choice of the previous vectors $m_1, \ldots, m_{j-1}$.
\end{remark}

Mahler defined a more general notion of successive minima with respect to more general convex bodies.
Studying these could be an interesting problem for further research.

Lenstra proved the following result about the relation of successive minima and orthogonality defect.

\begin{theorem}[Lenstra]
    Let $b_1, \ldots, b_n$ be a basis of the lattice $L$ with $OD(b_1, \ldots, b_n) = 0$, ordered in such a way that $|b_1| \leq |b_2| \leq \ldots \leq |b_n|$.
    Then $|b_j|$ is the $j$-th successive minimum and in particular $b_1$ is a shortest vector of the lattice $L$.
\end{theorem}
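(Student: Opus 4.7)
The plan is to reduce the theorem to a non-archimedean size property for coordinates in the basis $b_1, \ldots, b_n$, and then apply a rank argument. For each $i$, introduce the leading-coefficient vector $\hat{b}_i \in \F_q^n$ whose $j$-th entry is the coefficient of $x^{|b_i|}$ in the $j$-th entry of $b_i$ (zero if that entry has smaller degree). The first key claim I would prove is that $\OD(b_1, \ldots, b_n) = 0$ is equivalent to $\hat{b}_1, \ldots, \hat{b}_n$ being $\F_q$-linearly independent. Indeed, letting $B$ be the matrix with rows $b_i$ and expanding $\det(B)$ by the Leibniz formula, each summand $\prod_i b_{i,\sigma(i)}$ has degree at most $\sum_i |b_i|$, and the coefficient of $x^{\sum_i |b_i|}$ in $\det(B)$ is precisely $\det(\hat{B})$, where $\hat{B}$ has rows $\hat{b}_i$. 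Hence $\deg(\det B) = \sum_i |b_i|$ if and only if $\det(\hat{B}) \neq 0$.

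Next I would establish the size property: for any $v = \sum_i \mu_i b_i \in L$ one has $|v| = \max_i (|\mu_i| + |b_i|)$. The inequality $\leq$ is immediate from the non-archimedean estimate. For $\geq$, set $d = \max_i (|\mu_i| + |b_i|)$ and $S = \{i : |\mu_i| + |b_i| = d\}$; the coefficient of $x^d$ in $v$ is $\sum_{i \in S} \operatorname{lc}(\mu_i) \hat{b}_i$, a nontrivial $\F_q$-linear combination of the $\hat{b}_i$, and therefore nonzero by the first step. With the size property in hand, the theorem is short. The bound $\lambda_j \leq |b_j|$ is immediate from the $j$ linearly independent vectors $b_1, \ldots, b_j$, each of norm at most $|b_j|$. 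For the reverse bound, let $v_1, \ldots, v_j \in L$ be any $j$ linearly independent lattice vectors and write $v_k = \sum_i \mu_{k,i} b_i$. The $j \times n$ coefficient matrix $M = (\mu_{k,i})$ has rank $j$ over $\F_q(x)$, so there exist columns $i_1 < \cdots < i_j$ (necessarily with $i_j \geq j$) whose induced $j \times j$ submatrix has nonzero determinant. Applying Leibniz again, some permutation $\tau$ of $\{1, \ldots, j\}$ contributes a nonzero product, giving $\mu_{k, i_{\tau(k)}} \neq 0$ for every $k$. Choosing $k^\ast$ with $\tau(k^\ast) = j$ and invoking the size property yields $|v_{k^\ast}| \geq |\mu_{k^\ast, i_j}| + |b_{i_j}| \geq |b_j|$, so at least one of the $v_k$ has norm at least $|b_j|$.

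I expect the main obstacle to be the size property itself. The non-archimedean triangle inequality is tight only when the leading terms of the summands $\mu_i b_i$ do not cancel, and controlling this for arbitrary coefficient tuples $(\mu_i)$ is exactly where the reducedness assumption is used: the first step shows that $\OD = 0$ lifts precisely to $\F_q$-linear independence of the leading-coefficient vectors $\hat{b}_i$, which is what rules out cancellation. Once this structural input is available, the successive-minima statement reduces to the elementary rank argument outlined above.
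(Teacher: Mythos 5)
Your proof is correct, but note that the paper does not actually prove this statement: it is imported verbatim from Lenstra's paper \cite{lenstra1983factoring} and stated without proof, so there is no in-paper argument to compare against. What you have written is essentially a self-contained reconstruction of Lenstra's own argument. Your two key steps are both sound: the Leibniz-formula computation correctly identifies the coefficient of $x^{\sum_i |b_i|}$ in $\det(B)$ with $\det(\hat{B})$, so $\OD(b_1,\ldots,b_n)=0$ does force the leading-coefficient vectors $\hat{b}_i$ to be $\F_q$-linearly independent; and the resulting size property $|v| = \max_i(|\mu_i|+|b_i|)$ is exactly the structural lemma on which Lenstra's treatment rests (it is what makes a reduced basis behave like an ``orthogonal'' one). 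The concluding rank argument is also fine: the only points you leave implicit are the equivalence between the paper's iterative definition of successive minima and the min--max characterization you tacitly use (a one-line dimension count: any $j$ linearly independent vectors cannot all lie in the span of $m_1,\ldots,m_{j-1}$), and the observation $|\mu_{k^\ast,i_j}|\geq 0$ for a nonzero polynomial, both of which are routine. So the proposal fills a gap the paper deliberately leaves to the literature, and does so correctly.
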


We will not describe the algorithm from Lenstra's paper, but we state the main result.
Actually, the algorithm is not very long and also simple to implement.

\begin{theorem}[Lenstra]
\label{thm:lenstra complexity}
    Let $C > 0$ be such that $|b_i| \leq C$ for all $i \in \{1, \ldots, n \}$.
    Then there exists an algorithm that takes $O(n^3 C (\OD(b_1, \ldots, b_n) + 1))$ arithmetic operations to transform a basis $b_1, \ldots, b_n$ of $L$ into a reduced basis for $L$.
\end{theorem}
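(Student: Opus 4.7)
The plan is to follow Lenstra's LLL-style iterative reduction, adapted to the non-archimedean setting where degrees play the role of logarithms of norms. First I would maintain throughout the algorithm a Gram--Schmidt orthogonalization of $b_1, \ldots, b_n$ over the rational function field $\F_q(x)$, recording the orthogonalized vectors $b_i^\ast$ together with the Gram--Schmidt coefficients $\mu_{ij}$. Because the $b_i^\ast$ are orthogonal in the standard inner product sense, one has $|\det L| = \sum_i |b_i^\ast|$, so the orthogonality defect equals $\sum_i (|b_i| - |b_i^\ast|)$, a nonnegative integer whose vanishing is exactly the termination condition.

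The main loop performs two operations, in analogy with integer LLL. Size reduction replaces $b_i$ by $b_i - p_{ij} b_j$ for $j < i$, where $p_{ij}$ is the polynomial part of $\mu_{ij}$; this leaves every $b_i^\ast$ and the determinant unchanged, but arranges $|\mu_{ij}| < 0$ and in particular does not increase $|b_i|$. The swap step exchanges adjacent $b_i$ and $b_{i+1}$ whenever this strictly decreases $\sum_k |b_k|$, which is equivalent to the non-archimedean analogue of the Lov\'asz condition being violated at position $i$.

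The central point, and the main obstacle in the analysis, is to show that every swap strictly decreases $\OD(b_1, \ldots, b_n)$. Since $\covol(L)$ is preserved while $\sum_k |b_k|$ drops, and since all quantities involved are integers, a strict decrease is automatically a decrease by at least one. Hence the algorithm performs at most $\OD(b_1, \ldots, b_n)$ swaps, after which no adjacent pair is out of order; combined with size reduction, this forces $|b_1| \leq |b_2| \leq \ldots \leq |b_n|$ with the $|b_i^\ast|$ matching the $|b_i|$, so $\OD = 0$ and the basis is reduced.

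It remains to bound the per-swap cost. Updating the Gram--Schmidt data after a single swap requires $O(n)$ polynomial operations, and one full sweep of size reduction touches $O(n^2)$ pairs $(i,j)$, each requiring one polynomial multiplication and subtraction. One needs to check that all intermediate polynomials remain of degree $O(C)$: size reduction can only decrease $|b_i|$, and after size reduction $|b_i| \leq \max_j |b_j^\ast| \leq C$, so this invariant is preserved throughout. Charging $O(C)$ arithmetic operations per polynomial operation of degree $O(C)$ yields $O(n^3 C)$ work per swap iteration, and multiplying by the swap count gives the claimed bound $O(n^3 C (\OD(b_1, \ldots, b_n) + 1))$, where the ``$+1$'' absorbs the final size-reduction sweep performed when no swap is available.
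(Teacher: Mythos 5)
The paper offers no proof of this theorem at all: it is quoted from Lenstra \cite{lenstra1983factoring} with the explicit remark that the algorithm will not be described. So your sketch has to be measured against Lenstra's actual argument, and there it has two genuine gaps. First, the Gram--Schmidt scaffolding does not exist in this setting. The norm $|(a_1,\ldots,a_n)| = \max_i \deg a_i$ is not induced by any bilinear form, so orthogonalizing with respect to the ``standard inner product'' on $\F_q(x)^n$ gives you nothing: the identity $|\det L| = \sum_i |b_i^\ast|$ fails for such $b_i^\ast$ (leading terms of $\langle b_i^\ast, b_i^\ast\rangle$ can cancel), and over $\F_2$ --- the case the paper actually cares about --- the form is isotropic, so the Gram--Schmidt division by $\langle b_j^\ast, b_j^\ast\rangle$ can be a division by zero. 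Orthogonality here is the combinatorial condition $\OD = 0$, detected by whether the leading-coefficient vectors $\mathrm{lc}(b_i) \in \F_q^n$ (coefficients of $x^{|b_i|}$ in each coordinate) are linearly independent over $\F_q$; there is no inner-product surrogate.

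Second, your progress measure is attached to the wrong operation. A swap of $b_i$ and $b_{i+1}$ is a permutation of the basis, so it leaves $\sum_k |b_k|$, and hence $\OD(b_1,\ldots,b_n) = \sum_k|b_k| - |\det L|$, exactly unchanged; it cannot ``strictly decrease $\OD$.'' In Lenstra's algorithm the decrease comes from elimination steps: if $\OD > 0$ the vectors $\mathrm{lc}(b_i)$ are $\F_q$-linearly dependent, say $\sum_{i \in S}\lambda_i\,\mathrm{lc}(b_i) = 0$, and subtracting $\sum_{i \in S, i \neq j}\lambda_j^{-1}\lambda_i x^{|b_j|-|b_i|} b_i$ from a $b_j$ of maximal degree in $S$ is a unimodular operation that strictly lowers $|b_j|$, hence lowers $\OD$ by at least $1$. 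Finding the dependence costs $O(n^3)$ field operations and the update costs $O(n^2 C)$, giving $O(n^3 C)$ per step (degrees stay bounded by $C$ since they only decrease), and at most $\OD + 1$ rounds are needed; that is where the bound $O(n^3 C(\OD(b_1,\ldots,b_n)+1))$ comes from. No swaps, no size reduction, and no Lov\'asz condition are involved. Your final degree bound and the accounting of the ``$+1$'' are fine once the argument is repaired along these lines.
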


In particular one can find a shortest vector in the given time.
Note that one gets the following corollary using $\OD(b_1, \ldots, b_n) \leq n C$.

\begin{corollary}[Lenstra]
\label{cor:lenstra complexity}
    Let $C > 0$ be such that $|b_i| \leq C$ for all $i \in \{1, \ldots, n \}$.
    Then there exists an algorithm that takes $O(n^4 C^2)$ arithmetic operations to transform a basis $b_1, \ldots, b_n$ of $L$ into a reduced basis for $L$.
\end{corollary}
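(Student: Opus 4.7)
The plan is to deduce Corollary \ref{cor:lenstra complexity} directly from Theorem \ref{thm:lenstra complexity} by establishing the crude upper bound $\OD(b_1, \ldots, b_n) \leq nC$ and then substituting it into the complexity expression $O(n^3 C (\OD(b_1, \ldots, b_n) + 1))$.

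First I would unfold the definition of orthogonality defect from the excerpt:
\[
\OD(b_1, \ldots, b_n) = \left( \sum_{i=1}^n |b_i| \right) - |\det L|.
\]
The sum on the right is at most $nC$ by the hypothesis $|b_i| \leq C$ for every $i$, so it suffices to argue that $|\det L| \geq 0$. This is where I would use that $b_1, \ldots, b_n$ is a basis: by definition the basis vectors are linearly independent over $\F_q[x]$, so the matrix $B$ with rows $b_i$ has nonzero determinant, and therefore $|\det L| = \deg(\det B) \geq 0$. Combining the two bounds yields $\OD(b_1, \ldots, b_n) \leq nC$, as announced in the paragraph preceding the corollary.

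Plugging this inequality into Theorem \ref{thm:lenstra complexity} gives a total cost of $O(n^3 C (nC + 1)) = O(n^4 C^2 + n^3 C)$ arithmetic operations. The $n^3 C$ term is absorbed by $n^4 C^2$ in the regime where the bound is meaningful (in particular as soon as $nC \geq 1$, which is the only case in which the algorithm does nontrivial work), leaving the claimed $O(n^4 C^2)$ estimate. There is no real obstacle here: the entire argument is a one-line substitution, and the only conceptual ingredient is recognizing that linear independence of a basis forces the degree of its determinant to be nonnegative, turning the defect into a quantity controlled purely by $n$ and $C$.
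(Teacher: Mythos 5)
Your proposal is correct and follows exactly the route the paper intends: it derives $\OD(b_1,\ldots,b_n) \leq nC$ from $\sum_i |b_i| \leq nC$ and $|\det L| \geq 0$, then substitutes into the bound of Theorem \ref{thm:lenstra complexity}. The paper gives only the one-line remark ``using $\OD(b_1,\ldots,b_n) \leq nC$''; you have simply filled in the same argument with slightly more detail.
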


We will prove the following two lemmata since we could not find the statements and the proofs in the literature.
However, they are not strictly necessary for the remainder of the paper.
It is a classical result that for lattices $L_2 \subseteq L_1 \subseteq \R^n$ we have $[L_1 : L_2] = \frac{\covol(L_2)}{\covol(L_1)}$.
We prove the following analogue.

\begin{lemma}
\label{lemma:index_poly_lattice_covolume}
    Let $L_2 \subseteq L_1 \subseteq \F_q[x]^n$ be two lattices.
    Then
    \[
        [L_1 : L_2] = q^{\left|\frac{\det(L_2)}{\det(L_1)} \right|} = q^{\covol(L_2) - \covol(L_1)}.
    \]
\end{lemma}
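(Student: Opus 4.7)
The plan is to mimic the classical $\Z$-proof via Smith normal form, which is available since $\F_q[x]$ is a PID. First I would pick $\F_q[x]$-bases of $L_1$ and $L_2$ and arrange them as the rows of matrices $B_1, B_2 \in \F_q[x]^{n \times n}$. Using $L_2 \subseteq L_1$ there is a matrix $M \in \F_q[x]^{n \times n}$ with $B_2 = M B_1$, so taking determinants yields $\det(L_2) = \det(M)\det(L_1)$ (up to a unit) and hence
\[
\left|\tfrac{\det(L_2)}{\det(L_1)}\right| = |\det M| = \covol(L_2) - \covol(L_1).
\]
The task therefore reduces to the identity $[L_1 : L_2] = q^{|\det M|}$.

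Next I would invoke the Smith normal form over $\F_q[x]$: there exist $U, V \in \mathrm{GL}_n(\F_q[x])$ and a diagonal matrix $D = \mathrm{diag}(d_1, \ldots, d_n)$ with $d_i \in \F_q[x] \setminus \{0\}$ (nonzero because $L_2$ has full rank) such that $UMV = D$. Since $U$ and $V$ are unimodular, they induce $\F_q[x]$-module automorphisms of $\F_q[x]^n$, so passing through the isomorphism $\F_q[x]^n \cong L_1$ given by the basis $B_1$ gives
\[
L_1 / L_2 \;\cong\; \F_q[x]^n / M\F_q[x]^n \;\cong\; \F_q[x]^n / D\F_q[x]^n \;\cong\; \bigoplus_{i=1}^n \F_q[x]/(d_i).
\]

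Finally I would compute the cardinality component by component. Each quotient $\F_q[x]/(d_i)$ is an $\F_q$-vector space of dimension $\deg d_i = |d_i|$, hence has $q^{|d_i|}$ elements. Multiplying gives
\[
[L_1 : L_2] = \prod_{i=1}^n q^{|d_i|} = q^{\sum_{i=1}^n |d_i|} = q^{|\det D|} = q^{|\det M|} = q^{\covol(L_2) - \covol(L_1)},
\]
which is exactly the claim.

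The only real content lies in the existence of Smith normal form over the PID $\F_q[x]$; everything else is bookkeeping. The main subtlety to be careful about is the convention $|f| = \deg f$: here $|\cdot|$ behaves additively under multiplication, so $|\det D| = \sum_i |d_i|$ and the exponent in $q^{\covol(L_2) - \covol(L_1)}$ comes out as an ordinary degree, not as the exponent of a multiplicative absolute value.
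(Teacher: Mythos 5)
Your proof is correct and follows essentially the same route as the paper: the paper invokes the elementary divisor theorem to get a basis $b_1,\ldots,b_n$ of $L_1$ with $\alpha_1 b_1,\ldots,\alpha_n b_n$ a basis of $L_2$, which is exactly the module-theoretic form of the Smith normal form computation you carry out, and both arguments then count $q^{\deg \alpha_i}$ (resp.\ $q^{|d_i|}$) elements per cyclic factor and compare with the determinant via multilinearity. The only difference is presentational (matrices and unimodular transformations versus a directly adapted basis), so there is nothing substantive to change.
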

\begin{proof}
    By the elementary divisor theorem we can choose a basis $b_1, \ldots, b_n$ of $L_1$ and $\alpha_1, \ldots, \alpha_n \in \F_q[x]^\ast$ such that $\alpha_1 b_1, \ldots, \alpha_n b_n$ is a basis of $L_2$.
    Then by multilinearity of the determinant we get that
    \[
        \det(L_2) = \left(\prod_{i=1}^n \alpha_i \right) \det(L_1).
    \]
    Hence
    \[
        \left|\frac{\det(L_2)}{\det(L_1)} \right| = \sum_{i=1}^n |\alpha_i|.
    \]
    Furthermore,
    \[
        [L_1 : L_2] = \prod_{i=1}^n q^{|\alpha_i|}  = q^{\sum_{i=1}^n |\alpha_i|}.
    \]
    This finishes the proof.
\end{proof}

We define an analogue of $\gamma$-ary matrices in this setting.
Let $g \in \F_q[x]$.
Further assume $k \leq n$ and $A \in \F_q[x]^{k \times n}$.
We define
\[
    \Lambda_g^\perp(A) = \{ a \in \F_q[x]^n : A a \equiv 0 \mod g \}.
\]

We have the following analogue to Lemma \ref{lemma:covol q-ary lattice}.
\begin{lemma}
\label{lemma:index polynomial lattice}
    Let $k \leq n$ and $A \in \F_q[x]^{k \times n}$.
    Then
    \[
        [\F_q[x]^n : \Lambda_g^\perp(A)] \leq q^{|g| k}.
    \]
\end{lemma}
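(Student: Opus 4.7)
The plan is to mimic the argument for Lemma \ref{lemma:covol q-ary lattice} almost verbatim, replacing $\Z$ by $\F_q[x]$ and $\gamma$ by $g$. The key structural fact one needs is that $\F_q[x]/(g)$ is a finite ring whose cardinality can be read off from $|g|$.

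First I would introduce the natural projection $\pi \colon \F_q[x]^n \to (\F_q[x]/(g))^n$ and observe that if $\bar A$ denotes the reduction of $A$ modulo $g$, then by definition
\[
\Lambda_g^\perp(A) = \pi^{-1}(\ker \bar A).
\]
Applying the standard isomorphism theorems (for $\F_q[x]$-modules, which are just abelian groups here) gives
\[
\F_q[x]^n / \Lambda_g^\perp(A) \;\cong\; (\F_q[x]/(g))^n / \ker \bar A \;\cong\; \im \bar A \;\subseteq\; (\F_q[x]/(g))^k.
\]

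The next step is to count: since $\F_q[x]/(g)$ is an $\F_q$-vector space with basis $\{1, x, \ldots, x^{|g|-1}\}$, it has exactly $q^{|g|}$ elements, and hence $|(\F_q[x]/(g))^k| = q^{|g| k}$. Passing to cardinalities in the inclusion above yields
\[
[\F_q[x]^n : \Lambda_g^\perp(A)] = |\im \bar A| \leq q^{|g| k},
\]
which is exactly the claim.

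There is no real obstacle here; the only thing worth being careful about is keeping the absolute-value notation $|g| = \deg g$ straight and noting that the proof tacitly uses $g \neq 0$ (otherwise $\Lambda_g^\perp(A) = \ker A$ as an $\F_q[x]$-submodule, and the statement is vacuous or needs separate interpretation). One could also remark, in analogy with Lemma \ref{lemma:index_poly_lattice_covolume}, that the index bound is equivalent to the covolume bound $\covol(\Lambda_g^\perp(A)) \leq |g| k$, which is the form the result will be used in later sections.
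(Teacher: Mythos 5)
Your proof is correct and is exactly the argument the paper intends: the paper's own proof simply says it is identical to that of Lemma \ref{lemma:covol q-ary lattice}, and you have carried out that translation (projection, isomorphism theorems, counting $|\F_q[x]/(g)| = q^{|g|}$) faithfully. Your side remark about $g \neq 0$ is a reasonable point of care but does not change the substance.
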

\begin{proof}
    The proof is identical to the proof of Lemma \ref{lemma:covol q-ary lattice}.
\end{proof}

Using Lemma \ref{lemma:index_poly_lattice_covolume} we can also rewrite the previous lemma as $ \covol(\Lambda_g^\perp(A)) \leq |g| k$.

\begin{example}
    Let $A = (I_k| 0_{n-k})$.
    Then
    \[
        \Lambda_g^\perp(A) = (g \F_q[x]^k) \times \F_q[x]^{n-k}.
    \]
    In this case, we get equality in Lemma \ref{lemma:index polynomial lattice}.
    Namely,
    \[
        \covol(\Lambda_g^\perp(A)) = |g| k.
    \]
\end{example}

Finally we will state the theorem by Mahler which is the analogue of Minkowski's Theorem in a slightly generalized form as in \cite{bagshaw2025lattices}.

\begin{theorem}[Mahler]
\label{thm:mahler}
Let $L$ be a lattice and $|m_j|$ be its $j$-th successive minimum. Then
    \[
        \sum_{j=1}^n |m_j| = \covol(L).
    \]
\end{theorem}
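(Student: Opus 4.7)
The plan is to deduce the equality directly from the two theorems of Lenstra already stated. The key feature of the non-archimedean setting (and the reason we get an equality here, rather than the two-sided inequality of the classical Minkowski second theorem) is that a basis of orthogonality defect exactly zero always exists, and the norms of such a basis simultaneously realize the sum $\covol(L)$ and the successive minima.

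First, I would invoke Theorem \ref{thm:lenstra complexity} (its existence aspect) to produce a reduced basis $b_1, \ldots, b_n$ of $L$, i.e.\ one with $\OD(b_1, \ldots, b_n) = 0$. I would then reorder it so that $|b_1| \leq |b_2| \leq \cdots \leq |b_n|$. Unwinding the definition of orthogonality defect gives
\[
\sum_{i=1}^n |b_i| \;=\; |\det(L)| \;=\; \covol(L).
\]
Next, I would apply the earlier theorem of Lenstra, which asserts that for such an ordered reduced basis the value $|b_j|$ is exactly the $j$-th successive minimum $|m_j|$. Combining the two identities yields $\sum_{j=1}^n |m_j| = \covol(L)$, as required.

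The main obstacle has essentially been absorbed into Lenstra's results: the nontrivial content is the existence of a reduced basis and the coincidence of its sorted norms with the successive minima, both already available to us. What remains after that is a one-line bookkeeping step. It is worth emphasizing that this clean equality is specific to the polynomial (non-archimedean) setting; the corresponding archimedean statement only holds up to dimension-dependent constants, because in $\R^n$ a basis of orthogonality defect zero need not exist.
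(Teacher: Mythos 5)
Your derivation is correct, but it is worth noting that the paper itself gives no proof of this statement: it is quoted as a theorem of Mahler, in the generalized form of \cite{bagshaw2025lattices}, and the reader is referred to the literature. Mahler's original argument is a direct non-archimedean geometry-of-numbers proof, independent of any reduction algorithm. What you do instead is reassemble the result from the two Lenstra theorems already stated in the paper: Theorem \ref{thm:lenstra complexity} guarantees the existence of a basis with $\OD(b_1,\ldots,b_n)=0$ (every lattice here comes equipped with a basis by definition, so the algorithm always has a valid input), the definition of the orthogonality defect gives $\sum_i |b_i| = |\det(L)| = \covol(L)$, and the first Lenstra theorem identifies the sorted $|b_j|$ with the successive minima. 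There is no circularity, since Lenstra's results are proved in \cite{lenstra1983factoring} without appeal to Mahler's theorem. Your route buys a short, self-contained proof from ingredients the paper already has on the table, at the cost of resting on the (algorithmic) existence of a reduced basis rather than on a purely geometric argument; your closing remark that the exact equality is a genuinely non-archimedean phenomenon, failing in $\R^n$ where zero orthogonality defect is generally unattainable, is also accurate and a useful observation.
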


\begin{corollary}[Mahler]
\label{cor:mahler}
Let $L$ be a lattice.
Then for the first successive minimum $|m_1|$ we have $|m_1| \leq \frac{\covol(L)}{n}$, i.e., there exists a non-zero lattice vector in $L$ of norm at most $\frac{\covol(L)}{n}$.
\end{corollary}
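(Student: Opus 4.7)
The plan is to deduce this corollary directly from Mahler's Theorem (Theorem \ref{thm:mahler}) together with the fact that the successive minima form a non-decreasing sequence. There is essentially no obstacle here; it is a one-line averaging argument.

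First I would recall from Definition \ref{def:successive_min_hypercube} that the successive minima are built up iteratively by always taking a shortest vector linearly independent of the previous ones, so by construction
\[
    |m_1| \leq |m_2| \leq \cdots \leq |m_n|.
\]
Next I would invoke Mahler's Theorem, which asserts $\sum_{j=1}^n |m_j| = \covol(L)$. Combining these two facts gives
\[
    n |m_1| \;\leq\; \sum_{j=1}^n |m_j| \;=\; \covol(L),
\]
from which the bound $|m_1| \leq \covol(L)/n$ follows. A vector achieving the first successive minimum is a non-zero lattice vector of the desired norm, which proves the ``in particular'' part of the statement.

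The only mild subtlety worth checking, if one wanted to be careful, is that Mahler's Theorem is stated assuming the lattice has full rank $n$ (the successive minima are indexed up to $n$); under the convention used in Section \ref{sect:poly_lattices} this is built into the definition of a lattice since the $b_i$ are required to be linearly independent over $\F_q[x]$ and there are $n$ of them, so no extra argument is needed.
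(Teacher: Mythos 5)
Your argument is correct and is exactly the intended deduction: the paper states this as an immediate corollary of Theorem \ref{thm:mahler} without writing out a proof, and the averaging step $n|m_1| \leq \sum_{j=1}^n |m_j| = \covol(L)$ together with the monotonicity of the successive minima is precisely what is being left implicit. Your remark about full rank is also consistent with the paper's standing convention that lattices are spanned by $n$ linearly independent vectors in $\F_q[x]^n$.
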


\section{BIKE from a Lattice Viewpoint}
\label{sect:BIKE_Lattices}

In the BIKE cryptosystem we have a similar problem as in Thue's Lemma \ref{lemma:Thue} from Section \ref{sect:real_lattices}, but in the polynomial setting.
We want to solve
\[
    h_1 - h h_2 \equiv 0 \mod x^r - 1,
\]
for $h_1$ and $h_2$ which are sparse.
This problem has been studied as a problem of rational reconstruction (without the lattice viewpoint) in \cite{BardetDLO16}.
Again we can build a lattice
\[
    L = \{ (h_1, h_2) : h_1 - h h_2 \equiv 0 \mod x^r - 1 \}.
\]
Note that this lattice can be constructed solely from the public key.
We will propose a new algorithm to get weak keys in a similar way as in \cite{BardetDLO16} by using lattice reduction and adding a possibility for using brute force.
In practice our algorithm improves only very slightly upon the algorithm proposed before, but it gives a more unified perspective on it, especially since lattice reduction for polynomial rings over finite fields is computationally easy in general and not just for the $2$-dimensional case as for the integers.

We can even write down an explicit basis for our lattice as we show in the following lemma.
\begin{lemma}
Let $L = \{ (h_1, h_2) : h_1 - h h_2 \equiv 0 \mod x^r - 1 \}$.
Then the vectors
\[
    (x^r - 1, 0), (h, 1)
\]
form a basis of $L$.
\end{lemma}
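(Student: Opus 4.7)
The plan is to verify three things in turn: (i) both proposed vectors lie in $L$; (ii) they are linearly independent over $\F_2[x]$; (iii) every element of $L$ is an $\F_2[x]$-linear combination of them. For (i) a direct substitution suffices: for $(x^r - 1, 0)$ we have $(x^r - 1) - h \cdot 0 = x^r - 1 \equiv 0 \pmod{x^r - 1}$, and for $(h, 1)$ we have $h - h \cdot 1 = 0$. For (ii) the $2 \times 2$ matrix with these rows has determinant $(x^r - 1) \cdot 1 - h \cdot 0 = x^r - 1 \neq 0$, so the rows are linearly independent over the fraction field of $\F_2[x]$ and hence over $\F_2[x]$ itself.

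The content of the lemma is in step (iii). Given an arbitrary $(h_1, h_2) \in L$, the defining congruence produces some $q \in \F_2[x]$ with $h_1 - h h_2 = q (x^r - 1)$. I would then observe the identity
\[
    (h_1, h_2) \;=\; q \cdot (x^r - 1, 0) \;+\; h_2 \cdot (h, 1),
\]
which is an $\F_2[x]$-combination of the two claimed basis vectors. Combined with linear independence this shows the two vectors form a basis.

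One minor subtlety I would flag is that $h$ is really a class in $\F_2[x]/(x^r - 1)$, so writing $(h, 1) \in \F_2[x]^2$ depends on a choice of representative. Different representatives differ by a multiple of $x^r - 1$ in the first coordinate, hence by an $\F_2[x]$-multiple of $(x^r - 1, 0)$, so they generate the same lattice. There is no real obstacle here: the proof is essentially an unraveling of definitions, and I would expect to write it in three short sentences corresponding to the steps above.
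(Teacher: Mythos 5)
Your proof is correct and follows essentially the same route as the paper: the paper subtracts $h_2 \cdot (h,1)$ from $(h_1,h_2)$ and notes the remainder $(h_1 - h h_2, 0)$ is a multiple of $(x^r-1,0)$, which is exactly your explicit identity $(h_1,h_2) = q\,(x^r-1,0) + h_2\,(h,1)$. Your additional checks of membership, linear independence via the determinant, and the remark about the choice of representative for $h$ are all sound and, if anything, slightly more complete than the paper's version.
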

\begin{proof}
It is obvious that both vectors are in $L$.
To show that they form a basis take $(h_1, h_2) \in L$.
Then also
\[
    (h_1', 0) = (h_1, h_2) - h_2  (h, 1) \in L.
\]
But then
\[
    h_1' \equiv 0 \mod x^r - 1,
\]
and consequently we can express $(h_1, h_2)$ as a linear combination of $(x^r - 1, 0)$ and $(h, 1)$ with coefficients in $\F_q[x]$.
\end{proof}

Note that a similar result also holds for the case of the integers from Section \ref{sect:real_lattices}.
It is now easy to compute the determinant of $L$.
\begin{corollary}
\label{cor:basis_lattice}
    Let $L = \{ (h_1, h_2) : h_1 - h h_2 \equiv 0 \mod x^r - 1 \}$.
    Then the determinant of $L$ is $x^r - 1$.
\end{corollary}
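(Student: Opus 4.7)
The plan is to invoke the previous lemma directly: it furnishes the explicit basis $(x^r - 1, 0)$ and $(h, 1)$ of $L$, so by definition the determinant of the lattice equals the determinant of the $2 \times 2$ matrix having these vectors as rows. First I would write down that matrix, namely
\[
    B = \begin{pmatrix} x^r - 1 & 0 \\ h & 1 \end{pmatrix},
\]
and then compute $\det(B) = (x^r - 1) \cdot 1 - 0 \cdot h = x^r - 1$.

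Since the determinant is defined only up to units in $\F_q[x]^\ast$ (as noted in the remark following the definition of the orthogonality defect), no normalization issue arises: the stated value is a legitimate representative. There is essentially no obstacle here; the work has already been done in the preceding lemma, and all that remains is a one-line determinant computation of an upper triangular matrix. In particular, as a byproduct we also obtain $\covol(L) = |x^r - 1| = r$, which will be useful later when combining with Mahler's Theorem \ref{thm:mahler} to bound the successive minima of $L$.
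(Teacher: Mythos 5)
Your proposal is correct and matches the paper's proof exactly: both invoke the explicit basis from the preceding lemma and compute the determinant of the triangular matrix $\begin{pmatrix} x^r - 1 & 0 \\ h & 1 \end{pmatrix}$ to be $x^r - 1$. The additional remarks about units and $\covol(L) = r$ are accurate and consistent with how the paper uses this corollary afterwards.
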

\begin{proof}
    The determinant of 
    \[
        \begin{pmatrix}
        x^r - 1 & 0\\
        h & 1
    \end{pmatrix}
    \]
    is $x^r - 1$.
\end{proof}
From Corollary \ref{cor:basis_lattice} we get the following result for rational reconstruction, similar to Thue's Lemma \ref{lemma:Thue}.

\begin{corollary}
\label{cor:shortest_vector_bike_lattice}
    Let $L = \{ (h_1, h_2) : h_1 - h h_2 \equiv 0 \mod x^r - 1 \}$.
    There exists $a, b \in L$ with $|a| + |b| = r$.
    In particular there exists $a \in L$ with $|a| \leq \frac{r-1}{2}$.
\end{corollary}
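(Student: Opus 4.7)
The plan is to read this off Mahler's Theorem (Theorem \ref{thm:mahler}) applied to the lattice $L$, whose covolume has already been computed in Corollary \ref{cor:basis_lattice}.

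First, by Corollary \ref{cor:basis_lattice} the determinant of $L$ is $x^r - 1$, so $\covol(L) = |x^r - 1| = r$. Since $L \subseteq \F_q[x]^2$ has rank $2$, Theorem \ref{thm:mahler} gives $|m_1| + |m_2| = r$, where $|m_1|, |m_2|$ are the first and second successive minima of $L$. Taking $a = m_1$ and $b = m_2$ yields the first claim.

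For the second claim, by definition $|m_1| \leq |m_2|$, and both are non-negative integers (since a successive minimum is attained by a non-zero lattice vector in $\F_q[x]^2$, and the norm of a non-zero polynomial vector is a non-negative integer). Hence $2|m_1| \leq |m_1| + |m_2| = r$, so $|m_1| \leq r/2$. Recall from Section \ref{sect:BIKE} that $r$ is an odd prime (as $2$ generates $\F_r^\times$, forcing $r \neq 2$), so the integer $|m_1|$ satisfies $|m_1| \leq (r-1)/2$.

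The only subtlety is making sure the successive minima are attained by genuine lattice vectors (so that the first part gives actual $a, b \in L$ and the norms are non-negative integers), but this is immediate from Definition \ref{def:successive_min_hypercube}. No real obstacle arises; the statement is essentially a two-line corollary of Mahler's Theorem once the covolume is known.
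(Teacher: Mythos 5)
Your proof is correct and follows the same route as the paper: compute $\covol(L) = r$ from Corollary \ref{cor:basis_lattice} and apply Mahler's Theorem \ref{thm:mahler} to the rank-$2$ lattice. You actually supply slightly more detail than the paper does, in particular the observation that $r$ is odd (so the integer bound $|m_1| \leq r/2$ sharpens to $|m_1| \leq \frac{r-1}{2}$), which the paper leaves implicit.
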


\begin{proof}
    We have $\covol(L) = r$ from the previous corollary.
    By applying Theorem \ref{thm:mahler} we get the vectors $a, b$ as desired.
\end{proof}
We could have also proven the statement using Lemma \ref{lemma:index polynomial lattice}.
In fact, for $L$ as above we can compute $a$ and $b$ using Lenstra's algorithm.
We get the following result from Lenstra's Corollary \ref{cor:lenstra complexity}.
\begin{lemma}
\label{lemma:successive_minima_complexity_bike}
    Let $L = \{ (h_1, h_2) : h_1 - h h_2 \equiv 0 \mod x^r - 1 \}$.
    Then one can find a reduced basis and hence also the successive minima of $L$ in time $O(r^2)$.
\end{lemma}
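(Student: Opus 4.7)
The plan is a direct application of Lenstra's complexity result (Corollary \ref{cor:lenstra complexity}) to the explicit basis of $L$ provided by the previous lemma, followed by Lenstra's theorem that identifies the sorted norms of a reduced basis with the successive minima.

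First I would take the basis $b_1 = (x^r - 1, 0)$ and $b_2 = (h, 1)$ supplied by the preceding lemma. Since we may represent $h \in \F_2[x]/(x^r-1)$ by its unique lift of degree less than $r$, we have $|b_1| = r$ and $|b_2| \leq r-1$, so both basis vectors satisfy $|b_i| \leq C$ with $C = r$. The lattice has rank $n = 2$.

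Next I would invoke Corollary \ref{cor:lenstra complexity}, which transforms the basis $b_1, b_2$ into a reduced basis in $O(n^4 C^2)$ arithmetic operations. With $n = 2$ and $C = r$ this is $O(r^2)$ operations. Finally, Lenstra's theorem preceding Theorem \ref{thm:lenstra complexity} states that when a reduced basis is ordered so that $|b_1'| \leq |b_2'|$, the norms $|b_j'|$ are exactly the successive minima of $L$. Thus sorting the two output vectors by norm (a constant-time operation) yields the successive minima, and the total runtime remains $O(r^2)$.

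There is essentially no obstacle here beyond bookkeeping: the only point to check is that the explicit basis we start from already has degree $O(r)$, which is immediate from the shape of $(x^r - 1, 0)$ and the standard reduced representation of $h$ modulo $x^r - 1$. Everything else is a plug-in of the parameters $n = 2$ and $C = r$ into Lenstra's black box.
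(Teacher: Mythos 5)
Your proposal is correct and follows exactly the route the paper intends: the paper derives this lemma directly from Corollary \ref{cor:lenstra complexity} applied to the explicit basis $(x^r-1,0)$, $(h,1)$ with $n=2$ and $C=r$, which is precisely your plug-in argument. The additional observation that sorting the reduced basis by norm yields the successive minima (via Lenstra's theorem) is the same implicit final step the paper relies on.
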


Now, let $h_1$ and $h_2$ be the actual private keys.
In case that $|(h_1, h_2)| < \frac{r}{2}$, using Lemma \ref{lemma:successive_minima_complexity_bike}, we recover essentially the attack from \cite{BardetDLO16} since the attack from \cite{BardetDLO16} usually computes the shortest vector $a$ from Corollary \ref{cor:shortest_vector_bike_lattice}.
In \cite{BardetDLO16} the authors also analyzed the probability that the condition on the degrees is satisfied.

We can now slightly extend this attack using brute force.
It is worth mentioning that in the attack from \cite{BardetDLO16} this is not possible since they only found a shortest vector, but no reduced basis.
As long as the degrees of the private key are not too far from the degrees of the reduced basis we can compute the reduced basis and then try to find the private key as described in Algorithm \ref{alg:weak_keys}.

\begin{algorithm}[H]
\caption{Weak Keys via Lattice Reduction}\label{Weak Key Algorithm}
\label{alg:weak_keys}
\textbf{Input:}
\begin{itemize}
    \item Public Key $h$ for BIKE
    \item A bound $B$ on the degree we use for brute force.
\end{itemize}
\textbf{Step 1:}
Compute a reduced basis $a = (a_{1}, a_{2}), b = (b_{1}, b_{2})$ for the lattice $L = \{ (h_1, h_2) : h_1 - h h_2 \equiv 0 \mod x^r - 1 \}$ using Lenstra's algorithm.\\
\textbf{Step 2:}
\begin{algorithmic}
\For{$\mu_1, \mu_2 \in \F_2[x]_{<B}$}
\State
Compute $(h_1', h_2') = \mu_1 (a_{1}, a_{2}) + \mu_2 (b_{1}, b_{2})$.
\If {$h_1'$ and $h_2'$ are sparse}
\State \Return $(h_1', h_2')$.
\EndIf
\EndFor
\end{algorithmic}
\end{algorithm}

However, there is a slight difference in the weak keys we get.
In \cite{BardetDLO16} they have the condition $\deg(h_1) + \deg(h_2) < r$, whereas we have the condition $|(h_1, h_2)| = \max \{\deg(h_1), \deg(h_2)\} < \frac{r}{2} + B$.
Some of the weak keys from \cite{BardetDLO16} we will not recover with Algorithm \ref{alg:weak_keys}, and on the other hand some of our weak keys will not be recovered by the algorithm in \cite{BardetDLO16}.
Combining both approaches enables us to check for more (however not many more) weak keys than in the original paper.
Using calculations similar to those in \cite{BardetDLO16} we expect to gain at most a few bits upon the original attack for reasonable values on the bound $B$ of the degree when using brute force.
One can also apply to our approach the same improvements as in \cite{BardetDLO16} using group actions which we have not considered here.
In our case the analysis on the success probability follows along similar lines as in \cite{BardetDLO16}.

One problem on which the hardness of BIKE rests is the following.
\begin{problem}
\label{problem:sparse_norm}
    How hard is it to find short vectors in the lattice $L$ where the norm is defined by $|(h_1, h_2)| = \wt(h_1) + \wt(h_2)$.
\end{problem}
However, while being a norm over $\F_q$ endowed with the trivial absolute value, this is not a norm over $\F_q(T)$ endowed with the absolute value arising from the degree and the natural norm on the lattice is quite far from the norm in Problem \ref{problem:sparse_norm}.
Are there better algorithms to tackle this problem?

\section{Conclusion and Open Questions}
\label{sect:conclusion}

In this work we recast the rational reconstruction problem for key recovery of the BIKE cryptosystem as a problem of lattices over polynomial rings.
Two questions that immediately arise are whether taking Mahler's \cite{mahler1941analogue} more general definition of successive minima can give more weak keys and whether they can be computed efficiently.

One of the main open questions is whether there are better key recovery attacks that use a similar approach.
In particular, the lattice constructed in this paper is a rank $2$ lattice and closely related to the Euclidean algorithm and continued fractions.
Is there a way to construct a higher rank lattice that makes it possible to get (many) more keys?

We would like to point out that there have been two problems that started out with rank $2$ lattices (or the extended Euclidean algorithm and continued fractions) and could be improved considerably using higher rank lattices.
One is Wiener's classical attack \cite{wiener1990cryptanalysis} on small private keys using continued fractions and its improvements by Boneh and Durfee \cite{boneh2000cryptanalysis} relying on Coppersmith's attack \cite{coppersmith1997small} which uses the LLL algorithm.
Another one is the decoding of Reed-Solomon codes by Berlekamp and Massey \cite{berlekamp1966nonbinary, massey2003shift} (reinterpreted as continued fractions by Mills \cite{mills1975continued}, Welch and Scholtz \cite{welch1979continued}) and the breakthrough of List Decoding by Guruswami and Sudan \cite{guruswami1998improved} (reinterpreted by Bernstein \cite{Bernstein2008smallheight}, Cohn and Heninger \cite{CohnH15} as an analogue of Coppersmith's theorem).
Can something similar be done for the key recovery of the secret key of BIKE?
At present this question is open and it requires new ideas.

In addition, it might also be worth applying similar methods to HQC \cite{melchor2018hamming}.

\section*{Acknowledgements}

I would like to thank Maxime Bombar for introducing me to the Function Field Approach for quasi-cyclic codes, which underlies cryptosystems like BIKE or HQC.
Furthermore, I would like to thank him for mentioning the work of Bernstein, Cohn and Heninger on the Coppersmith Lattice Approach to List Decoding.
In addition I would like to thank Paolo Santini for bringing \cite{BardetDLO16} to my attention.
I would also like to thank the reviewers for their valuable comments.

This work is supported by the Swiss National Science Foundation under SNSF grant number 212865.

\printbibliography
\end{document}